\documentclass{article}
\usepackage{ijcai26}

\newdimen\proofrulebreadth \proofrulebreadth=.05em
\newdimen\proofdotseparation \proofdotseparation=1.25ex
\newdimen\proofrulebaseline \proofrulebaseline=2ex
\newcount\proofdotnumber \proofdotnumber=3
\let\then\relax
\def\hfi{\hskip0pt plus.0001fil}
\mathchardef\squigto="3A3B
\newif\ifinsideprooftree\insideprooftreefalse
\newif\ifonleftofproofrule\onleftofproofrulefalse
\newif\ifproofdots\proofdotsfalse
\newif\ifdoubleproof\doubleprooffalse
\let\wereinproofbit\relax
\newdimen\shortenproofleft
\newdimen\shortenproofright
\newdimen\proofbelowshift
\newbox\proofabove
\newbox\proofbelow
\newbox\proofrulename
\def\shiftproofbelow{\let\next\relax\afterassignment\setshiftproofbelow\dimen0 }
\def\shiftproofbelowneg{\def\next{\multiply\dimen0 by-1 }%
\afterassignment\setshiftproofbelow\dimen0 }
\def\setshiftproofbelow{\next\proofbelowshift=\dimen0 }
\def\setproofrulebreadth{\proofrulebreadth}

\def\prooftree{
\ifnum  \lastpenalty=1
\then   \unpenalty
\else   \onleftofproofrulefalse
\fi
\ifonleftofproofrule
\else   \ifinsideprooftree
        \then   \hskip.5em plus1fil
        \fi
\fi
\bgroup
\setbox\proofbelow=\hbox{}\setbox\proofrulename=\hbox{}%
\let\justifies\proofover\let\leadsto\proofoverdots\let\Justifies\proofoverdbl
\let\using\proofusing\let\[\prooftree
\ifinsideprooftree\let\]\endprooftree\fi
\proofdotsfalse\doubleprooffalse
\let\thickness\setproofrulebreadth
\let\shiftright\shiftproofbelow \let\shift\shiftproofbelow
\let\shiftleft\shiftproofbelowneg
\let\ifwasinsideprooftree\ifinsideprooftree
\insideprooftreetrue
\setbox\proofabove=\hbox\bgroup$\displaystyle 
\let\wereinproofbit\prooftree
\shortenproofleft=0pt \shortenproofright=0pt \proofbelowshift=0pt
\onleftofproofruletrue\penalty1
}

\def\eproofbit{
\ifx    \wereinproofbit\prooftree
\then   \ifcase \lastpenalty
        \then   \shortenproofright=0pt  
        \or     \unpenalty\hfil         
        \or     \unpenalty\unskip       
        \else   \shortenproofright=0pt  
        \fi
\fi
\global\dimen0=\shortenproofleft
\global\dimen1=\shortenproofright
\global\dimen2=\proofrulebreadth
\global\dimen3=\proofbelowshift
\global\dimen4=\proofdotseparation
\global\count255=\proofdotnumber
$\egroup  
\shortenproofleft=\dimen0
\shortenproofright=\dimen1
\proofrulebreadth=\dimen2
\proofbelowshift=\dimen3
\proofdotseparation=\dimen4
\proofdotnumber=\count255
}

\def\proofover{
\eproofbit 
\setbox\proofbelow=\hbox\bgroup 
\let\wereinproofbit\proofover
$\displaystyle
}%
\def\proofoverdbl{
\eproofbit 
\doubleprooftrue
\setbox\proofbelow=\hbox\bgroup 
\let\wereinproofbit\proofoverdbl
$\displaystyle
}%
\def\proofoverdots{
\eproofbit 
\proofdotstrue
\setbox\proofbelow=\hbox\bgroup 
\let\wereinproofbit\proofoverdots
$\displaystyle
}%
\def\proofusing{
\eproofbit 
\setbox\proofrulename=\hbox\bgroup 
\let\wereinproofbit\proofusing
\kern0.3em$
}

\def\endprooftree{
\eproofbit 
  \dimen5 =0pt
\dimen0=\wd\proofabove \advance\dimen0-\shortenproofleft
\advance\dimen0-\shortenproofright
\dimen1=.5\dimen0 \advance\dimen1-.5\wd\proofbelow
\dimen4=\dimen1
\advance\dimen1\proofbelowshift \advance\dimen4-\proofbelowshift
\ifdim  \dimen1<0pt
\then   \advance\shortenproofleft\dimen1
        \advance\dimen0-\dimen1
        \dimen1=0pt
        \ifdim  \shortenproofleft<0pt
        \then   \setbox\proofabove=\hbox{%
                        \kern-\shortenproofleft\unhbox\proofabove}%
                \shortenproofleft=0pt
        \fi
\fi
\ifdim  \dimen4<0pt
\then   \advance\shortenproofright\dimen4
        \advance\dimen0-\dimen4
        \dimen4=0pt
\fi
\ifdim  \shortenproofright<\wd\proofrulename
\then   \shortenproofright=\wd\proofrulename
\fi
\dimen2=\shortenproofleft \advance\dimen2 by\dimen1
\dimen3=\shortenproofright\advance\dimen3 by\dimen4
\ifproofdots
\then
        \dimen6=\shortenproofleft \advance\dimen6 .5\dimen0
        \setbox1=\vbox to\proofdotseparation{\vss\hbox{$\cdot$}\vss}%
        \setbox0=\hbox{%
                \advance\dimen6-.5\wd1
                \kern\dimen6
                $\vcenter to\proofdotnumber\proofdotseparation
                        {\leaders\box1\vfill}$%
                \unhbox\proofrulename}%
\else   \dimen6=\fontdimen22\the\textfont2 
        \dimen7=\dimen6
        \advance\dimen6by.5\proofrulebreadth
        \advance\dimen7by-.5\proofrulebreadth
        \setbox0=\hbox{%
                \kern\shortenproofleft
                \ifdoubleproof
                \then   \hbox to\dimen0{%
                        $\mathsurround0pt\mathord=\mkern-6mu%
                        \cleaders\hbox{$\mkern-2mu=\mkern-2mu$}\hfill
                        \mkern-6mu\mathord=$}%
                \else   \vrule height\dimen6 depth-\dimen7 width\dimen0
                \fi
                \unhbox\proofrulename}%
        \ht0=\dimen6 \dp0=-\dimen7
\fi
\let\doll\relax
\ifwasinsideprooftree
\then   \let\VBOX\vbox
\else   \ifmmode\else$\let\doll=$\fi
        \let\VBOX\vcenter
\fi
\VBOX   {\baselineskip\proofrulebaseline \lineskip.2ex
        \expandafter\lineskiplimit\ifproofdots0ex\else-0.6ex\fi
        \hbox   spread\dimen5   {\hfi\unhbox\proofabove\hfi}%
        \hbox{\box0}%
        \hbox   {\kern\dimen2 \box\proofbelow}}\doll%
\global\dimen2=\dimen2
\global\dimen3=\dimen3
\egroup 
\ifonleftofproofrule
\then   \shortenproofleft=\dimen2
\fi
\shortenproofright=\dimen3
\onleftofproofrulefalse
\ifinsideprooftree
\then   \hskip.5em plus 1fil \penalty2
\fi
}

\usepackage{times}
\usepackage{soul}
\usepackage{url}
\usepackage[colorlinks]{hyperref}
\usepackage[utf8]{inputenc}
\usepackage[small]{caption}
\usepackage{graphicx}
\usepackage{amsmath}
\usepackage{amsthm}
\usepackage{amsfonts} %
\usepackage{amssymb} %
\usepackage{booktabs}
\usepackage{algorithm}
\usepackage{algorithmic}
\usepackage[switch]{lineno}
\usepackage[most]{tcolorbox}

\usepackage{mdwlist}
\usepackage{wrapstuff}

\newtheorem{thrm}{Theorem}[section]

\newtheorem{lemm}[thrm]{Lemma}

\newtheorem{_nttn}[thrm]{Notation}
\newtheorem{_defn}[thrm]{Definition}
\newtheorem{_xmpl}[thrm]{Example}
\newtheorem{_rmrk}[thrm]{Remark}
\newenvironment{nttn}{\begin{_nttn}\normalfont}{\end{_nttn}}
\newenvironment{defn}{\begin{_defn}\normalfont}{\end{_defn}}
\newenvironment{xmpl}{\begin{_xmpl}\normalfont}{\end{_xmpl}}
\newenvironment{rmrk}{\begin{_rmrk}\normalfont}{\end{_rmrk}}

\newcommand\deffont[1]{{\bfseries #1}}
\newcommand\f[1]{{\text{$\mathit{#1}$}}}  %
\newcommand\tf[1]{{\text{$\mathsf{#1}$}}} %
\newcommand\pfs[1]{{\text{$\mathtt{#1}$}}} %

\newcommand\bbN{\mathbb N}

\newcommand\ment{\vDash}
\newcommand\nment{\nvDash}

\newcommand\fromto[2]{#1..#2}
\newcommand\afromto[1]{[#1]}
\newcommand\interpolant{\f{itplnt}}
\newcommand\theX{X}

\newcommand\thex{x}

\newcommand\thea{a}
\newcommand\theb{b}
\newcommand\thevft{t}

\newcommand\lmodel{[\hspace{-0.2em}[}
\newcommand\rmodel{]\hspace{-0.2em}]}
\newcommand\model[1]{{\lmodel #1 \rmodel}}
\newcommand\sem[1]{\model{#1}_\varsigma}

\DeclareMathSymbol{\shortminus}{\mathbin}{AMSa}{"39}
\newcommand\minus{{\shortminus}}
\newcommand\plus{{+}}
\newcommand\Forall[1]{\forall #1.}

\newcommand\arity{\f{arity}}

\newcommand\compressthis[1]{{\pmb{\hspace{.8pt}\raisebox{.5pt}{\scalebox{.85}{$#1$}}\hspace{.2pt}}}}
\newcommand\syntaxrel[1]{\mathbin{{\pmb{#1}}}}

\newcommand\teq{\syntaxrel{\text{=}}}
\newcommand\tleq{\syntaxrel{\leq}} %
\newcommand\tless{\syntaxrel{<}}

\newcommand\tand{\syntaxrel{\wedge}}
\newcommand\tor{\syntaxrel{\vee}}

\newcommand\tall{{\compressthis{\forall}}}
\newcommand\tallc[1]{\tall_{\hspace{-1pt}#1}}

\title{Cryptographic certificates of validity for trustworthy AI}

\author{
    Murdoch J. Gabbay 
    \affiliations
    Heriot-Watt University, Edinburgh, UK 
    \emails
    m.gabbay@hw.ac.uk
}

\begin{document}

\maketitle

\begin{abstract}
We propose cryptographic certificates of validity for agentic AI systems. 
The core idea is to formally specify a correctness or policy condition as a logical predicate, compile this predicate to a witness-checking problem over polynomial constraints, and use a succinct cryptographic proof system (and optionally zero-knowledge) to certify that the condition holds.

This offers a middle ground between formal verification of source code, and cryptographic authentication.
An agent's action can be accompanied by an independently checkable proof that it satisfies an agreed formal policy, without requiring the verifier to trust the agent or to re-execute computation. 

We outline the approach at a high level, give the core mathematical translation, relate the proposal to proof-carrying code, zkVMs, formal methods, and agent governance, and note the specification, auditing, and deployment questions that a full implementation must answer.

\end{abstract}

\section{Introduction}

Agentic AI systems are beginning to take actions, not merely produce recommendations.
A travel assistant may book a flight; an enterprise agent may approve an invoice; a software agent may deploy code; a robotic agent may act in the physical world.
For such systems a key question is: \emph{is this action authorised and policy compliant; is it correct?}

Cryptographic signatures traditionally authenticate \emph{origin}, by certifying that the sender of a message knew a certain key.
Logs can be used retrospectively to reconstruct events. 
However, cryptographic signatures, logs, and monitoring cannot prove that an action will satisfy correctness, safety, or compliance conditions \emph{in advance}.
In this paper, we propose a complementary mechanism: cryptographic certificates of \emph{validity} for agent actions, before actions are carried out.

The idea is simple.  
\begin{enumerate*}
\item
Express the relevant policy or correctness condition as a formal specification in an appropriate logic. 
\item
Compile the resulting finite witness-checking problem to cryptography-friendly polynomial constraints (as we will sketch).  
\item
Require the agent or an associated prover to attach a succinct cryptographic proof that the constraints are satisfied. 
(In zero-knowledge settings, the verifier may check the certificate without learning private witness data.)
\end{enumerate*}
The slogan is: \emph{do not trust an action because of its origin; trust it because it carries cryptographically checkable evidence of correctness.}

A note on scope may now be helpful.
The mathematics in Section~\ref{sect.slice.of.maths} is application-agnostic: it provides an arithmetisation of first‑order logic validity to judgements on polynomial constraints.
This may be useful wherever a verifier requires succinct cryptographic evidence that a stated condition holds. 
This builds a bridge from first-order specification to cryptographic certification.
The setting of agentic AI makes the assurance problem concrete and urgent --- autonomous action is a signal case in which origin-based authentication and post-hoc logging may not be fully adequate. 
But the mathematical architecture which we use to address this problem --- predicate $\to$ polynomial constraint $\to$ cryptographic certificate --- is general, and might be usefully specialised to any domain where untrusted parties must provide checkable evidence of correctness.

From this perspective, systems of AI agents are just heterogeneous distributed systems of agents who may be independently developed and may act according to their own priorities. 
They may also (to borrow a term from distributed systems) be \emph{Byzantine}, meaning that they do not satisfy the expected specification. 
Our proposal is to make selected actions proof-carrying, in a very particular sense: for their action to be accepted, they must produce a cryptographic certificate of validity of their correctness predicate.

Advanced AI assistants can plan and execute sequences of actions~\cite{gabriel2024ethics}.
Thus we need accountability, which requires evidence of where, why, how, and by whom agents are used~\cite{chan2024visibility}.  
This paper offers a complementary assurance: further to identity, logging, and risk-management practices such as those articulated in the NIST AI Risk Management Framework~\cite{nist2023airmf}, selected agent actions should carry compact certificates that a formal specification has been validated.
We can think of this as a cryptographic analogue of proof-carrying code~\cite{necula1997pcc}, instantiated using succinct arguments and polynomial arithmetisation~\cite{thaler2022proofs,gabbay2024arithcomp,gabbay2025arithmetisinglogic,garreta2025zinc}.

One final note, for the record: a certificate of validity for a formal specification does not in and of itself prove that the formal specification is the \emph{right} specification.
It proves, under the stated assumptions and to within appropriate soundness bounds of the underlying cryptographic back end, that the formal specification has been validated.

\section{A compact slice of maths}
\label{sect.slice.of.maths}

\begin{nttn}
Define $\afromto{n}=\{1,\dots,n\}\subseteq\mathbb N$.
If $R$ is a ring and $\theX$ is a formal \deffont{variable symbol} then $P\in R[\theX]$ is a formal sum of powers of $\theX$, which we call a \deffont{polynomial} (with coefficients in $R$).
Polynomials can be added and multiplied in the usual way.
For example, $(1+\theX) * (1 -\theX) = 1-\theX^2\in\mathbb Z[\theX]$. 
\end{nttn}

\begin{nttn}
\label{nttn.PQ}
If $P,Q\in\mathbb Z[\theX]$ then write $P(Q)\in\mathbb Z[\theX]$ for the result of \deffont{instantiating} $\theX$ to $Q$ in $P$.
For example if $P=\theX+1$ and $Q=2*\theX$ then 
$P(Q)=2*\theX+1$.
\end{nttn}

\begin{figure}
\begin{equation*}
\begin{array}{r@{\ }l} %
t::=&\theX \mid \thea\in\mathbb Z \mid t+t \mid t*t 
\mid \pfs{C}_i(t) \mid \tf{len}(\pfs{C}) 
\mid \tf{reify}(\phi)  
\\
\phi,\psi::=
&t\teq t \mid \phi\tand\phi \mid \phi\tor\phi \mid 
\tallc{\pfs{C}}\theX.\phi \mid
\pfs{C}_i\tless \thevft \mid \thevft\tless \pfs{C}_i
\end{array}
\end{equation*}
{\footnotesize Above, $i{\in}\afromto{\f{arity}(\pfs{C})}$. 
These grammars are in BNF style, such that $\phi$ and $t$ to the right of $::=$ represent any predicate or term.
Thus $\phi\tand\phi$ denotes a conjunction of two (possibly non-equal) predicates, and $t\teq t$ denotes two (possibly non-equal) terms.}
\caption{Syntax of terms and predicates (Definition~\ref{defn.logical.syntax})}
\label{fig.syntax}
\end{figure}

\begin{defn}
\label{defn.logical.syntax}
Fix an \deffont{index variable symbol} $\theX$ and a set of \deffont{polynomial function symbols} $\pfs{C},\pfs{D}\in\tf{PolyFunc}$, each of which has a fixed but arbitrary \deffont{arity} $\f{arity}(\pfs{C})\in\bbN_{\geq 1}$.
Write $\pfs{C}:n$ for the assertion ``$\pfs{C}\in\tf{PolyFunc}$ and $\f{arity}(\pfs{C})=n$''.

Define \deffont{terms} and \deffont{predicates} inductively as in Figure~\ref{fig.syntax}. 
\end{defn}

\begin{nttn}
\label{nttn.abbrv}
We may write 
$\thevft\tleq\pfs{C}_i\tleq\thevft'$ for $(\thevft-1\tless\pfs{C}_i)\tand (\pfs{C}_i\tless \thevft'+1)$.
We may call predicates based on $\tless$ \deffont{range-checks}.
\end{nttn}

\begin{figure}
$$
\begin{array}{r@{\ }l@{\ \ }r@{\ }l}
\model{\theX}_\varsigma=&X
&
\model{\thea}_\varsigma=&\thea 
\\
\model{t+t'}_\varsigma=&\model{t}_\varsigma+\model{t'}_\varsigma
&
\model{t*t'}_\varsigma=&\model{t}_\varsigma*\model{t'}_\varsigma
\\
\model{\pfs{C}_i(t)}_\varsigma=&\interpolant(\varsigma(\pfs{C})_i)(\model{t}_\varsigma)
&
\model{\tf{len}(\pfs{C})}_\varsigma=&\f{len}(\varsigma(\pfs{C}))
\\
\model{\tf{reify}(\phi)}_\varsigma=&\model{\phi}_\varsigma
\\[2ex]
\model{t\teq t'}_\varsigma=&(\model{t}_\varsigma-\model{t'}_\varsigma)^2
\\
\model{\phi\tand\phi'}_\varsigma=&\model{\phi}_\varsigma+\model{\phi'}_\varsigma
&
\model{\phi\tor\phi'}_\varsigma=&\model{\phi}_\varsigma*\model{\phi'}_\varsigma
\\
\model{\tallc{\pfs{C}}\theX.\phi}_\varsigma=&\rlap{$\sum_{x{\in}\afromto{\f{len}(\varsigma(\pfs{C}))}} \model{\phi}_\varsigma(x)$}
\\[2ex]
\model{\pfs{C}_i < \thevft}_\varsigma=& 
\rlap{$\begin{cases}
0 &\Forall{j{\in}\afromto{\f{len}(\varsigma(\pfs{C}))}}(\varsigma(\pfs{C})_{i,j}<\model{\thevft}_\varsigma(j))
\\
1 &\text{otherwise}
\end{cases}$}
\\[2ex]
\model{\thevft < \pfs{C}_i}_\varsigma=& 
\rlap{$\begin{cases}
0 &\Forall{j{\in}\afromto{\f{len}(\varsigma(\pfs{C}))}}(\model{\thevft}_\varsigma(j)<\varsigma(\pfs{C})_{i,j})
\\
1 &\text{otherwise}
\end{cases}$}
\end{array}
$$
{\small
In clauses mentioned in $\pfs{C}_i$ above, we insist on $i\in\afromto{\f{arity}(\pfs{C})}$ for well-formedness.}
\caption{Polynomial semantics for terms and predicates (Definition~\ref{defn.the.denotation})}
\label{fig.polynomial.semantics}
\end{figure}

\begin{nttn}
\label{nttn.matrix.nttn}
An \deffont{integer matrix} is an element $C\in\mathbb Z^{\f{ar}\times \f{ln}}$ where $\f{ar}\in\mathbb N$ (for \emph{arity}) is the number of \deffont{rows} and $\f{ln}\in\mathbb N$ (for \emph{length}) is the number of \deffont{columns}. 
Write $\f{len}(C)$ for $\f{ln}$ the number of columns in $C$, which we may call the \deffont{length} of $C$.
Write $C_i$ for the $i$th row in $C$ (a vector of length $\f{len}(C)$).
Write $C_{i,j}$ for the $i,j$-th element in $C$; $i$ is the row number and $j$ is the column number.
\end{nttn}

\begin{defn}
\label{defn.interpolation}
Suppose $\vec \thea=(\thea_1,\dots,\thea_n)\in\mathbb Z^{\f{ln}}$ is a $\f{ln}$-tuple of integers, for $\f{ln}\in\mathbb N$.
Say that $P\in\mathbb Q[\theX]$ \deffont{interpolates} $\vec \thea$ at $(1,2,\dots,\f{ln})$ when $\Forall{\thex\in\afromto{\f{ln}}}P(\thex)=\thea_x$. 

It is a fact~\cite[Chapter~9]{Salgado_Wise_2022} that every $\vec \thea\in\mathbb Z^{\f{ln}}$ can be interpolated by a unique \deffont{polynomial interpolant}  $\interpolant(\thea_1,\dots,\thea_{\f{ln}})\in\mathbb Q[\theX]$ of degree at most $\f{ln}\minus 1$, so $\interpolant(\thea_1,\dots,\thea_{\f{ln}})(\thex)=\thea_\thex$.\footnote{The \href{https://en.wikipedia.org/wiki/Polynomial_interpolation}{Wikipedia page} and \href{https://math.libretexts.org/Bookshelves/Applied_Mathematics/Numerical_Methods_(Chasnov)/05\%3A_Interpolation/5.01\%3A_Polynomial_Interpolation}{this overview} are excellent and very accessible introductions.}
\end{defn}

\begin{defn}
\label{defn.the.denotation}
\leavevmode
\begin{enumerate*}
\item\label{item.denotation.valuation}
An \deffont{interpretation} $\varsigma$ assigns to $\pfs{C}\in\tf{PolyFunc}$ %
a matrix $\varsigma(\pfs{C})\in\mathbb Z^{\f{arity}(\pfs{C})\times \f{len}(\varsigma(\pfs{C}))}$.
$\varsigma(\pfs{C})$ always has $\f{arity}(\pfs{C})$ rows; the number of columns depends on $\varsigma$.
\item\label{item.poly.denotation}
Define \deffont{semantics} $\model{t}_\varsigma,\model{\phi}_\varsigma\in\mathbb Q[X]$ as in Figure~\ref{fig.polynomial.semantics}.
\item\label{item.judgement.denotation}
If $x\in\mathbb Z$ then define the \deffont{(validity) judgement} $x\ment_\varsigma\phi$ to mean $\model{\phi}_\varsigma(x)=0$.
\\
If $\phi$ is closed (has no free variables) then define the \deffont{(validity) judgement} $\ment_\varsigma\phi$ to mean
$\model{\phi}_\varsigma(0)=0$.\footnote{The choice of evaluation point $0$ is unimportant; we could equivalently use $\model{\phi}_\varsigma(1)$, since for closed $\phi$, $\Forall{x,x'}\model{\phi}_\varsigma(x)=\model{\phi}_\varsigma(x')$.}
\end{enumerate*}
\end{defn}

\begin{lemm}
\label{lemm.geq.0}
$\model{\phi}_\varsigma(x)\geq 0$ for every $x\in\mathbb Q$, and in particular for every $x\in\mathbb Z$.
\end{lemm}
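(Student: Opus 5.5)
We argue by structural induction on the predicate $\phi$ (as generated by the grammar in Figure~\ref{fig.syntax}), proving the statement for every $x\in\mathbb Q$ at once. The claim for $x\in\mathbb Z$ then follows immediately, since $\mathbb Z\subseteq\mathbb Q$. Terms $t$ need no nonnegativity statement, because the semantics of a predicate only ever uses term denotations inside a square or inside a range-check comparison. The one subtlety is that terms can embed predicates via $\tf{reify}(\phi)$, but this does not matter here: we never need a sign property for terms.

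\emph{Base cases.} For $t\teq t'$ we have $\model{t\teq t'}_\varsigma(x)=(\model{t}_\varsigma(x)-\model{t'}_\varsigma(x))^2$, using that evaluation at $x$ is a ring homomorphism $\mathbb Q[\theX]\to\mathbb Q$. This is the square of a rational number, so it is $\geq 0$. For the range-checks $\pfs{C}_i\tless\thevft$ and $\thevft\tless\pfs{C}_i$, the denotation is by Figure~\ref{fig.polynomial.semantics} a constant polynomial with value $0$ or $1$. Evaluated at any $x$ it is therefore $0$ or $1$, hence $\geq 0$.

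\emph{Inductive cases.} For $\phi\tand\phi'$ and $\phi\tor\phi'$, evaluation at $x$ again commutes with $+$ and $*$. This gives $\model{\phi}_\varsigma(x)+\model{\phi'}_\varsigma(x)$ and $\model{\phi}_\varsigma(x)\cdot\model{\phi'}_\varsigma(x)$ respectively. By the inductive hypothesis both summands (respectively both factors) are nonnegative rationals, and nonnegative rationals are closed under sum and product.

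The quantifier case $\tallc{\pfs{C}}\theX.\phi$ is the only one needing a little care, and I expect it to be the main (mild) obstacle. The denotation is $\sum_{y\in\afromto{n}}\model{\phi}_\varsigma(y)$, where $n=\f{len}(\varsigma(\pfs{C}))$. I read this as a finite sum of the rational numbers $\model{\phi}_\varsigma(y)$, each obtained by instantiating $\theX$ to the integer $y$ as in Notation~\ref{nttn.PQ}. The result is a constant polynomial, and its value at any $x$ is that sum. Each summand is $\geq 0$ by the inductive hypothesis applied at the point $y\in\mathbb Z\subseteq\mathbb Q$. This is exactly why we state the induction hypothesis for all points of $\mathbb Q$ rather than for a fixed $x$. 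The empty sum when $n=0$ is $0$, which is also fine. This completes the induction.
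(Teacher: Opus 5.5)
Your proof is correct and matches the paper's argument, which is a routine structural induction over the clauses of Figure~\ref{fig.polynomial.semantics}, using that squares of rationals are nonnegative and that nonnegative rationals are closed under $+$ and $*$. You also spell out cases the paper leaves implicit: range checks denote the constants $0$ or $1$, and the quantifier case needs the induction hypothesis at the integer points $y\in\afromto{n}$, with the empty sum giving $0$.
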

\begin{proof}
Routine induction on Figure~\ref{fig.polynomial.semantics},
noting that: for $x,y\in\mathbb Q$,\ $(x-y)^2\geq 0$; and 
for $x,y\geq 0$,\ $x+y,x*y\geq 0$.
\end{proof}

We can now prove \emph{soundness} and \emph{completeness}; validity with respect to the predicate connectives in Figure~\ref{fig.syntax} behaves exactly as those symbols suggest:
\begin{thrm}[\cite{gabbay2024arithcomp,gabbay2025arithmetisinglogic}]
\label{thrm.basic.obs}
\leavevmode
\begin{enumerate*}
\item
$\thex\ment_\varsigma t\teq t'$ if and only if $\model{t}_\varsigma(\thex)=\model{t'}_\varsigma(\thex)$.  
\item\label{item.basic.obs.tand}
$\thex\ment_\varsigma\phi\tand\phi'$ if and only if $\thex\ment_\varsigma\phi\ \land\ \thex\ment_\varsigma\phi'$ and  
$\thex\ment_\varsigma\phi\tor\phi'$ if and only if $\thex\ment_\varsigma\phi\ \lor\ \thex\ment_\varsigma\phi'$.
\item\label{item.basic.obs.tall}
$\thex\ment_\varsigma\tallc{\pfs{C}}\theX.\phi$ if and only if $\thex'\ment_\varsigma\phi$ for every $\thex'\in\afromto{\f{len}(\varsigma(\pfs{C}))}$.
\item
$\thex\ment_\varsigma \thevft\tless\pfs{C}_i$ if and only if $\varsigma(\pfs{C})_{i,j}$ is strictly greater than $\model{\thevft}_\varsigma(j)$ for every $j\in\afromto{\f{len}(\varsigma(\pfs{C}))}$, and similarly for
$\thex\ment_\varsigma \pfs{C}_i\tless \thevft$.
\end{enumerate*}
\end{thrm}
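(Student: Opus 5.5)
Each part follows by unfolding the relevant clause of Figure~\ref{fig.polynomial.semantics} at the evaluation point $\thex$ and then using two facts about rationals. The first is that a square vanishes exactly when its base does. The second is that a sum of non-negative rationals vanishes exactly when every summand does. Non-negativity is supplied by Lemma~\ref{lemm.geq.0}. No induction is needed beyond what Lemma~\ref{lemm.geq.0} already provides, since each part concerns a single top-level connective. Throughout I use that evaluation at a point is a ring homomorphism $\mathbb Q[\theX]\to\mathbb Q$, so $(P+Q)(\thex)=P(\thex)+Q(\thex)$ and $(P*Q)(\thex)=P(\thex)Q(\thex)$.

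For part~1, $\model{t\teq t'}_\varsigma(\thex)=(\model{t}_\varsigma(\thex)-\model{t'}_\varsigma(\thex))^2$. This vanishes exactly when $\model{t}_\varsigma(\thex)=\model{t'}_\varsigma(\thex)$, because $\mathbb Q$ is a field and so has no nonzero nilpotents.

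For part~2, the conjunction clause gives $\model{\phi}_\varsigma(\thex)+\model{\phi'}_\varsigma(\thex)$. Both summands are $\geq 0$ by Lemma~\ref{lemm.geq.0}, so the sum is $0$ iff both summands are $0$. The disjunction clause gives the product $\model{\phi}_\varsigma(\thex)\cdot\model{\phi'}_\varsigma(\thex)$, which is $0$ iff one factor is $0$, since $\mathbb Q$ is an integral domain.

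For part~3, the quantifier clause gives $\sum_{\thex'\in\afromto{\f{len}(\varsigma(\pfs{C}))}}\model{\phi}_\varsigma(\thex')$. This is a constant polynomial, so its value at $\thex$ is this finite sum of rationals. Each summand is non-negative by Lemma~\ref{lemm.geq.0}, so the sum vanishes iff every summand vanishes, that is, iff $\thex'\ment_\varsigma\phi$ for each $\thex'$ in range.

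For part~4, the semantics of $\thevft\tless\pfs{C}_i$ is by definition the constant $0$ when the stated universal condition holds and the constant $1$ otherwise. Evaluating at $\thex$ therefore gives $0$ precisely when the condition holds, and the case $\pfs{C}_i\tless\thevft$ is symmetric.

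The only point needing care is the quantifier case, and I regard it as the main obstacle, mild though it is. Here one must read $\model{\phi}_\varsigma(x)$ in the sum as evaluation of the polynomial $\model{\phi}_\varsigma$ at the integer $x$, not as substitution of $x$ into the syntax. One must also note that the sum is independent of the outer $\thex$, because it is a constant. With that understood, the argument is exactly the non-negative-sum argument of part~2, applied to a finite family of summands instead of two.
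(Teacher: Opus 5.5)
Your proposal is correct and follows essentially the same route as the paper, which unfolds Figure~\ref{fig.polynomial.semantics} and appeals to Lemma~\ref{lemm.geq.0}, together with three facts: $(x-y)^2=0$ iff $x=y$; a sum of non-negatives vanishes iff each summand does; and a product vanishes iff some factor does. Your explicit treatment of the quantifier clause, as a constant polynomial equal to a finite sum of non-negative evaluations, is the detail the paper's ``routine'' proof leaves implicit.
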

\begin{proof}
Routine from Figure~\ref{fig.polynomial.semantics} and Lemma~\ref{lemm.geq.0}, noting that: for $x,y\in\mathbb Q$,\ $(x\minus y)^2=0$ iff $x=y$; and for $x,y\geq 0$,\ $x+y=0$ iff $x{=}0\land y{=}0$ and $x*y=0$ iff $x{=}0\lor y{=}0$.
\end{proof}

\section{Examples}
\label{sect.examples}

\subsection{Three one-line examples}

Section~\ref{sect.slice.of.maths} takes up one page of dense mathematics and concludes with one Theorem.
This is compact and theoretical much as logic gates are compact and theoretical: by composing together simple components we attain great expressivity, and by suitable implementation we can accomplish remarkable things.
We give the reader a taste of how this works.

\begin{xmpl}
\label{xmpl.simplest.example.semantics}
We can express truth as $0\teq 0$ and false as $0\teq 1$.
The reader can check that $\model{0\teq 0}_\varsigma=0$ (thus $\ment_\varsigma 0\teq 0$) and $\model{0\teq 1}_\varsigma=1$ (thus $\nment_\varsigma 0\teq 1$) so these predicates do behave like `true' and `false'.

Assume $\pfs{m}:2$,\ $\pfs{a}:3$,\ and $\pfs{p}:1$ with \emph{validity predicates}
$$
\begin{array}{r@{\ }l}
\chi_{\pfs{m}}=&\tallc{\pfs{m}}\theX.(\pfs{m}_2(\theX)\,\teq\,(\minus 1)*\pfs{m}_1(\theX))
\\
\chi_{\pfs{a}}=&\tallc{\pfs{a}}\theX.(\pfs{a}_3(\theX)\,\teq\,\pfs{a}_1(\theX)+\pfs{a}_2(\theX))
\\
\chi_{\pfs{p}}=&(0\tless \pfs{p}_1).
\end{array}
$$
Given an interpretation $\varsigma$, we have
$$
\hspace{-1ex}\scalebox{0.82}{$
\begin{array}{r@{\ }l}
\model{\chi_{\pfs{m}}}_\varsigma 
=&
\sum_{x\in\afromto{\f{len}(\varsigma(\pfs{m}))}} 
(\interpolant(\pfs{m}_2)(x)- (\minus 1)*\interpolant(\pfs{m}_1)(x))^2
\\
\model{\chi_{\pfs{a}}}_\varsigma 
=&
\sum_{x\in\afromto{\f{len}(\varsigma(\pfs{a}))}} 
(\interpolant(\pfs{a}_3)(x)- (\interpolant(\pfs{a}_1)(x)\plus\interpolant(\pfs{a}_2)(x)))^2
\\
\model{\chi_{\pfs{p}}}_\varsigma
=&0\ \text{if}\ \Forall{x\in\afromto{\f{len}(\varsigma(\pfs{p}))}}\varsigma(\pfs{p})_{1,x}>0,\ \text{and}\ 1\text{ otherwise}
\end{array}
$}
$$
So:
\begin{itemize*}
\item
$\ment_\varsigma\chi_{\pfs{m}}$ when $\varsigma(\pfs{m})$ is a two-row matrix where in each column the second entry is the negation of the first.
Thus, each column in $\varsigma(\pfs{m})$ represents a pair $(x,\minus x)$ for some $x\in\mathbb Z$.

$\arity(\pfs{m})=2$ because $\lambda x.\minus x$ has one input and output.
\item
$\ment_\varsigma\chi_{\pfs{a}}$ when $\varsigma(\pfs{a})$ is a three-row matrix where in each column the third entry is the sum of the first and the second; thus $\varsigma(\pfs{a})$ has columns of the form $(x,x',x\plus x')$ and samples the graph of $\lambda x,x'.x\plus x'$ at $\f{len}(\varsigma(\pfs{a}))$ many points.

$\arity(\pfs{a})=3$ because $\lambda x,x'.x\plus x'$ has two inputs and one output.
\item
$\ment_\varsigma\chi_{\pfs{p}}$ when $\varsigma(\pfs{p})$ is a one-row matrix (a vector) of strictly positive numbers.

$\arity(\pfs{p})=1$ because `is strictly positive' is a predicate on one argument. 
\end{itemize*}
\end{xmpl}

\begin{figure}[t]
\centering
\begin{tabular}{@{}p{0.11\textwidth}p{0.3\textwidth}@{}}
\toprule
\textbf{Logic} & \textbf{Polynomial} 
\\
\midrule
Truth values & \(0=\) true/valid/success; strictly positive \(=\) false/invalid/failure. \\
Equality & \(\sem{t\teq t'}=(\sem t-\sem{t'})^2\), so equality is valid exactly at zero. \\
Conjunction & \(\sem{\phi\tand\psi}=\sem\phi+\sem\psi\); sum of nonnegatives is zero iff both are zero. \\
Disjunction & \(\sem{\phi\tor\psi}=\sem\phi*\sem\psi\); product is zero iff at least one factor is zero. \\
Range check & \(0\) when satisfied, \(1\) otherwise. 
\\
Top-level $\forall$ & Root test for a range of values.
\\
$\phi$ valid & $\sem{\phi}=0$
\\
$\phi$ invalid & $\sem{\phi}>0$ %
\\
\bottomrule
\end{tabular}
\caption{Table of semantic meaning (Remark~\ref{rmrk.semantic.meaning})} %
\label{fig.zero-true-positive-false}
\end{figure}

\begin{rmrk}
\label{rmrk.semantic.meaning}
A table of semantic meanings is in Figure~\ref{fig.zero-true-positive-false}.

The reader may be more used to seeing `true = 1; false = 0', but representing truth with $0$ and non-truth with strictly positive values also has pedigree.  
In a paper from 1943~\cite{kleene:recpq} on page~51 just after equation~17, Kleene writes \emph{``a representing function $\pi$ of $P$, the value of which is to be $0$, $1$, or undefined according as the value of $P$ is true, false, or undefined''}.  
More recently, in the \href{https://github.com/openbsd/src/blob/66b892aa8a957b4f3b178c2b7d0f92797065231e/include/sysexits.h\#L97}{\tt sysexits.h} file (\href{https://man.openbsd.org/sysexits}{credited to Eric Allman} in 1980) which describes status codes for system programs, $0$ represents successful termination and strictly positive values represent various kinds of failure.
\end{rmrk}

We will consider one more example, because it is an important one.
We will indicate that our framework can handle \emph{recursion}; this is the essence of looping and of Turing-complete computation.
The reader will have to excuse that our example will be extremely simple, but it is the \emph{structure} of the computation that interests us.

\subsection{A recursive example}

\begin{defn}
\label{defn.pow.clauses}
Recall the inductive definition of $a^b$ for $a,b\in\bbN_{\geq 0}$:
\begin{equation*}
\begin{array}{l@{\qquad}r@{\ }l}
\text{base case (BC)}
&
\f{pow}(\thea,0)&=1
\\
\text{inductive step (IS)}
&
\f{pow}(\thea,\theb\plus 1)&=\thea*\f{pow}(\thea,\theb)
\end{array}
\end{equation*}
\end{defn}

\begin{rmrk}
\label{rmrk.derivation-tree}
Note how Definition~\ref{defn.pow.clauses}, and definitions like it, implicitly equate \emph{computation}, \emph{inductively-defined relations}, and \emph{derivation}, because $\f{pow}$ is a relation, it is inductively defined as above, and when we try to compute a value --- $\f{pow}(2,2)$, say --- we generate a derivation-tree as follows:
\begin{equation*}
\begin{prooftree}
\[
\[
\justifies
\f{pow}(2,0)=1
\using{\text{base case}}
\]
\justifies
\f{pow}(2,1)=2
\using{\text{inductive step}}
\]
\justifies
\f{pow}(2,2)=4
\using{\text{inductive step}}
\end{prooftree}
\end{equation*}
This example, simple as it is, illustrates the close link between logic, computation, and inductively defined relations. 
\end{rmrk}

\begin{defn}
\label{defn.matrix.encodes.pow}
Suppose $\f{ln}\in\bbN_{\geq 1}$.
Say a $4\times \f{ln}$ matrix $C$ \deffont{encodes a partial power function} when
\begin{equation*}
\Forall{\thea\in\fromto{1}{\f{ln}}}\ C_{3,\thea}=C_{1,\thea}^{C_{2,\thea}} .
\end{equation*}
In words, $C$ encodes a partial power function when for each column, the third entry is equal to the first entry raised to the power of the second.
\end{defn} 

\begin{defn}
\label{defn.pow.const}
\label{defn.pow.poly.spec}
Assume %
$\pfs{pow}:4$, so $\varsigma(\pfs{pow})\in\mathbb Z^{4\times\f{len}(\varsigma(\pfs{pow}))}$. 
Figure~\ref{fig.pow.poly} expresses a validity predicate $\chi_{\pfs{pow}}$, that $\pfs{pow}$ encodes a partial power function such that:  
\begin{enumerate*}
\item
Row~1 (values of $\pfs{pow}_1(\thea)$) stores input values $a$.
\item
Row~2 (values of $\pfs{pow}_2(\thea)$) stores input values $b$.
\item
Row~3 (values of $\pfs{pow}_3(\thea)$) stores output values $a^b$.
\item
Row~4 (values of $\pfs{pow}_4(\thea)$) stores the index of a column representing the recursive call %
in the inductive step.
\end{enumerate*}
\end{defn}

\begin{figure}[t]
{\small
Write 
$\pfs{pow}_1(t){=}\tf{in}_1(t)$, 
$\pfs{pow}_2(t){=}\tf{in}_2(t)$, 
$\pfs{pow}_3(t){=}\tf{out}(t)$, and 
$\pfs{pow}_4(t){=}\tf{rec}(t)$.
\begin{equation*}
\begin{array}{r@{\ }l}
\tf{BC}(\theX) 
=& 
(\tf{in}_2(\theX)\teq 0) \tand (\tf{out}(\theX)\teq 1) 
\\
\tf{IS}(\theX)
=&
\tf{in}_1(\theX) \teq \tf{in}_1(\tf{rec}(\theX))
 \tand 
\tf{in}_2(\theX) \teq \tf{in}_2(\tf{rec}(\theX))\plus 1
 \tand 
\\
&\quad\tf{out}(\theX) \teq \tf{in}_1(\theX)*\tf{out}(\tf{rec}(\theX)) 
\\
\chi_{\pfs{pow}} =& 1\tleq\tf{rec}\tleq \tf{len}(\pfs{pow})\ \tand\ \tall_{\pfs{pow}}\theX. (\tf{BC}(\theX) \ \tor\ \tf{IS}(\theX)) 
\end{array}
\end{equation*}
}
\caption{Exponentiation $a^b$ (Definitions~\ref{defn.pow.clauses} and~\ref{defn.pow.poly.spec})}
\label{fig.pow.poly}
\end{figure}

\begin{lemm}
\label{lemm.pow.poly.spec}
If \ $\ment_\varsigma \chi_{\pfs{pow}}$  
then $\varsigma(\pfs{pow})$ encodes a partial power function. %
\end{lemm}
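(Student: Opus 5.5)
The plan is to unfold $\ment_\varsigma\chi_{\pfs{pow}}$ with Theorem~\ref{thrm.basic.obs} into plain statements about the entries of $C=\varsigma(\pfs{pow})$, and then prove $C_{3,x}=C_{1,x}^{C_{2,x}}$ for every column $x$ by induction along the chain of recursive-call indices. Write $\f{ln}=\f{len}(C)$. Since $\chi_{\pfs{pow}}$ is a conjunction, part~\ref{item.basic.obs.tand} of Theorem~\ref{thrm.basic.obs} splits it into two facts.

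\emph{Range check.} Expanding Notation~\ref{nttn.abbrv}, the range-check clause gives $0<C_{4,j}<\f{ln}+1$ for every $j\in\afromto{\f{ln}}$. Since the entries are integers, every $C_{4,j}$ lies in $\afromto{\f{ln}}$.

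\emph{Quantifier.} By part~\ref{item.basic.obs.tall}, for every $x\in\afromto{\f{ln}}$ we have $x\ment_\varsigma\tf{BC}(\theX)\tor\tf{IS}(\theX)$. By part~\ref{item.basic.obs.tand} again, for each column either $\tf{BC}$ or $\tf{IS}$ holds there.

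Next I decode these into entry equations. At an integer point $x\in\afromto{\f{ln}}$ we have $\interpolant(C_i)(x)=C_{i,x}$ by Definition~\ref{defn.interpolation}. The nested term $\tf{in}_1(\tf{rec}(\theX))$ at $x$ evaluates $\interpolant(C_1)$ at $C_{4,x}$. Because $C_{4,x}\in\afromto{\f{ln}}$ by the range check, this is the genuine entry $C_{1,C_{4,x}}$. This is the one place where the range check is essential: outside $\afromto{\f{ln}}$ the interpolant returns meaningless values. Write $r(x)=C_{4,x}$. Then $\tf{BC}$ at $x$ says $C_{2,x}=0$ and $C_{3,x}=1$. $\tf{IS}$ at $x$ says
\begin{equation*}
C_{1,x}=C_{1,r(x)},\qquad C_{2,x}=C_{2,r(x)}+1,\qquad C_{3,x}=C_{1,x}\cdot C_{3,r(x)}.
\end{equation*}

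The main obstacle is well-foundedness. The entries $C_{2,x}$ are arbitrary integers, and $r$ is merely some self-map of a finite set, so a naive induction on $C_{2,x}$ is not available. I plan to handle this as follows. Start at any column $x$ and follow $x,r(x),r(r(x)),\dots$ for as long as the current column fails $\tf{BC}$, so that $\tf{IS}$ holds there. Along this path $C_2$ strictly decreases by $1$ at each step. Hence no column can repeat, and since there are only $\f{ln}$ columns, the path reaches a column satisfying $\tf{BC}$ after some $k\geq 0$ steps. It follows that $C_{2,x}=k\geq 0$ and that $C_1$ is constant along the path.

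Finally, I prove the claim for all columns $x$ by induction on $n=C_{2,x}$, which is now known to be a natural number. If $\tf{BC}$ holds at $x$, then $C_{3,x}=1=C_{1,x}^0$, using the convention $0^0=1$ of Definition~\ref{defn.pow.clauses}. Otherwise $\tf{IS}$ holds, so $n\geq 1$ and $C_{2,r(x)}=n-1$. The induction hypothesis gives $C_{3,r(x)}=C_{1,r(x)}^{n-1}=C_{1,x}^{n-1}$, and therefore $C_{3,x}=C_{1,x}\cdot C_{1,x}^{n-1}=C_{1,x}^{n}$. A column where both $\tf{BC}$ and $\tf{IS}$ hold causes no trouble, because the $\tf{BC}$ case applies directly. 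This establishes Definition~\ref{defn.matrix.encodes.pow}.
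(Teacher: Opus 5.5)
Your proof is correct and takes the paper's route: the paper's proof simply unfolds $\chi_{\pfs{pow}}$ with Theorem~\ref{thrm.basic.obs} and checks the BC/IS clauses against Definition~\ref{defn.pow.clauses}. Your version is much more explicit than the paper's one-line argument. In particular, you supply two points the paper leaves implicit: that the range check is what makes $\tf{in}_1(\tf{rec}(\theX))$ read a genuine matrix entry, and that the pointer chain is well-founded because $C_2$ strictly decreases over finitely many columns.
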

\begin{proof}
We examine the clauses in Figure~\ref{fig.pow.poly} and using Theorem~\ref{thrm.basic.obs} we check that they correctly encode the definition in Definition~\ref{defn.pow.clauses}.
\end{proof}

\begin{xmpl}
\label{xmpl.pow.do}
We illustrate some values for $\varsigma(\pfs{pow})$ such that $\ment_\varsigma\chi_{\pfs{pow}}$, so that by Lemma~\ref{lemm.pow.poly.spec} they encode a partial power function: 
\begin{equation*}
\scalebox{0.85}{$
\begin{pmatrix}
2 & 2 & 2 
\\ 
0 & 1 & 2 
\\ 
1 & 2 & 4 
\\ 
1 & 1 & 2 
\end{pmatrix} 
\quad
\begin{pmatrix}
2 & 2 & 2
\\ 
2 & 1 & 0
\\ 
4 & 2 & 1 
\\ 
2 & 3 & 3
\end{pmatrix} 
\quad
\begin{pmatrix}
2 & 2 & 2 & 2
\\ 
0 & 2 & 1 & 0
\\ 
1 & 4 & 2 & 1 
\\ 
2 & 3 & 1 & 1
\end{pmatrix} 
\quad
\begin{pmatrix}
3 & 2 & 3 & 2
\\ 
0 & 0 & 1 & 1
\\ 
1 & 1 & 3 & 2
\\ 
1 & 1 & 1 & 2
\end{pmatrix} 
$}
\end{equation*}
These all represent the derivation-tree from Remark~\ref{rmrk.derivation-tree}:
\begin{enumerate*}
\item
The leftmost matrix encodes in columns~1-3 respectively, the base case of a derivation that $2^0=1$, the step $2^1=2$, and the final step $2^2=4$.
Entry~4 in columns~2 and~3 contain a number which is a pointer to the previous step/column in the derivation; entry~4 in column~1 also contains a number but it is not used.
\item
The columns need not appear in a particular order, provided that the pointers match up; the next matrix illustrates this by putting the columns in the reverse order.
\item
Columns can be duplicated as per the third matrix.
\item
The fourth (rightmost) matrix includes information from computing $2^1=2$ and $3^1=3$.
\end{enumerate*}
\end{xmpl}

\begin{xmpl}
\label{xmpl.pow}
The converse implication for Lemma~\ref{lemm.pow.poly.spec} does not hold.
The predicate $\chi_{\pfs{pow}}$ certifies the pointwise correctness of the input-output samples in rows 1-3 \emph{and also} that those samples can be assembled into a well-formed derivation tree, whose recursive calls are represented by in-range pointers in row~4.

\begin{wrapstuff}[r]   
\scalebox{0.9}{$
\begin{pmatrix}
2\\
2\\ 
4\\
2
\end{pmatrix}
$}
\end{wrapstuff}
A matrix may encode true values of the partial power function while failing $\chi_{\pfs{pow}}$, if the certificate of derivation is missing or malformed.
For example, the one-column matrix to the right
contains the correct input-output sample (2,2,4), since \(2^2=4\).
However, it omits the recursive premise (2,1,2) required by the inductive step. 
Formally, \(\tf{len}(\pfs{pow})=1\) and the row~4 entry is 2, so the range check \(1\tleq \tf{rec} \tleq \tf{len}(\pfs{pow})\) fails.
Equivalently: the supposed recursive pointer points outside the matrix.

\begin{wrapstuff}[r]   
\scalebox{0.9}{$\begin{pmatrix}
2 & 2 & 2\\
2 & 1 & 0\\
4 & 2 & 1\\
2 & 3 & \underline{0}
\end{pmatrix}$}
\end{wrapstuff}
A second instructive failure is as per the three-column matrix to the right.
Rows 1-3 contain correct samples: \(2^2=4\), \(2^1=2\), and \(2^0=1\), and the first two recursive pointers are meaningful: column~1 points to column~2, and column~2 points to column~3.  
However, the underlined pointer in the base-case column is not in the range $\{1,2,3\}$.  
The base-case clause (BC) does not use this pointer, but our simple presentation imposes the uniform range check on all entries of the pointer row. 
Hence this matrix encodes correct samples of the partial power function but is not a well-formed
$\chi_{\pfs{pow}}$-certificate.  
Replacing the underlined 0 with any in-range value, for example with 1, would repair this.

In summary: rows~1-3 describe claimed values of the relation; row~4 supplies the proof-carrying structure that allows a verifier to check claims against the inductive definition.
\end{xmpl}

\begin{rmrk}
Example~\ref{xmpl.pow} connects directly to agent certification. 
Columns are not isolated output values; they are structured witnesses which reference each other to prove their validity. 
In an agentic setting this distinction separates a validator that checks whether an action is compliant from one that checks whether an action carries evidence of its compliance (in this case, columns pointing to one another to prove a recursive computation). 
This is precisely the proof-carrying structure that the certificates of Section~\ref{sect.how} would cryptographically attest to.
\end{rmrk}

\begin{rmrk}
\label{rmrk.pow.mitigate}
A more efficient specification of $a^b$ is: 
$$
\begin{array}{r@{\ }l}
\f{pow}(a,0)=&1
\\
\f{pow}(a,2*b)=&\f{pow}(a,b)*\f{pow}(a,b)
\\
\f{pow}(a,2*b+1)=&a*\f{pow}(a,b)*\f{pow}(a,b)
\end{array}
$$
This yields smaller derivations, which correspond algorithmically to a runtime that is logarithmic in $b$ rather than linear.
We could translate this into our logic, and it would just yield an implementation that permits shorter matrices.
\end{rmrk}

\begin{rmrk}
The logic in Figure~\ref{fig.syntax} is more expressive than it might seem.
In particular, negation \emph{is} expressible in the framework, via $\tf{reify}$.
A fuller treatment is in~\cite{gabbay2025arithmetisinglogic}.
A proposal for a SNARK-friendly constraint system is in~\cite{gabbay2026snarks}.
\end{rmrk}

\section{How to apply this mathematics to AI agents}
\label{sect.how}

Before we address the question of how to apply this mathematics to AI agents --- the motivating application of this paper, although the framework in Sections~\ref{sect.slice.of.maths} and~\ref{sect.examples} is general --- it is important to understand that polynomial semantics can be plugged directly into cryptographic proof schemes.

The semantics in Figure~\ref{fig.polynomial.semantics} is optimised for exposition, but variations optimised for implementation can be constructed.\footnote{One such variant has been submitted for publication, but at time of writing, submission rules prohibit sharing the results.  By time of publication I hope to be able to provide links.}

Thus it is important to appreciate that the presentation in Section~\ref{sect.slice.of.maths} is optimised for fitting onto one page including Figure~\ref{fig.polynomial.semantics} and Theorem~\ref{thrm.basic.obs}, consistent with clear exposition.
We can optimise for practical implementability instead, but that would be a longer and less accessible presentation: the passage from cleanly presentable theory to applicability is analogous to the passage from a high-level programming language to something more like assembly language or bytecode.
Be that as it may, the strategy for applying these ideas is simple: 
\begin{itemize*}
\item
Write down a predicate that expresses the desired notion of correctness, and translate it (using techniques analogous to program compilation) to an algebraic relation accepted by a cryptographic proof back end.  
Suitable back ends include integer or mixed-characteristic systems such as Zinc~\cite{garreta2025zinc} and polynomial-commitment/SNARK stacks such as Halo2 and related foundations~\cite{zcash2026halo2,boneh2021haloinfinite}.
\item 
Require the agent, or associated prover, to produce a cryptographic certificate accepted by the back end. 
Under the proof system's assumptions, and subject to stated soundness error, a valid certificate convinces the verifier that the encoded relation has a witness.
\end{itemize*}
If this seems conceptually simple, that is because it is.

Though this is a short paper, we can make this more concrete.
At the boundary of an agentic system, we can represent a certificate of validity abstractly as
\[
  \mathsf{Cert}
  =
  (\mathsf{policyID},\mathsf{action},\mathsf{pub},
   \mathsf{vk},\mathsf{paramsHash},\pi).
\]
Here \(\mathsf{policyID}\) identifies the formal predicate
\(\phi\) and the compiler version; \(\mathsf{action}\) is the
proposed action; \(\mathsf{pub}\) is the public instance data
needed to check that action; \(\mathsf{vk}\) and
\(\mathsf{paramsHash}\) identify the approved verifier key and
proof-system parameters; and \(\pi\) is the succinct cryptographic
proof. The intended statement verified by \(\pi\) is
\[
  \exists w.\; R_{\phi,\mathsf{action}}(\mathsf{pub},w)=0,
\]
where \(w\) is private witness data and
\(R_{\phi,\mathsf{action}}\) is the algebraic relation obtained
by compiling the formal predicate to polynomial constraints.

Thus verification of an action has the following form.

\begin{enumerate}
\item The policy author or system operator fixes a predicate
      \(\phi\) expressing the required correctness, safety, or
      compliance condition.
\item A compiler translates \(\phi\), together with the relevant
      public action data, into an algebraic relation
      \(R_{\phi,\mathsf{action}}\).
\item The agent, or an associated prover, produces a proof \(\pi\)
      that there exists witness data \(w\) satisfying that relation.
\item The receiving system checks \(\pi\) using the approved verifier
      key and rejects action if verification fails.
\end{enumerate}
We can sum this up as a slogan: \emph{logical validity translates to vanishing of the corresponding algebraic relation.}

In this interface the verifier need not trust the agent, inspect 
 internal implementation, or re-execute computation.
It just checks the proof. 
Conversely, the certificate proves only the compiled formal claim: it does not prove that the policy was wisely
chosen, that the compiler is bug-free, or that the proof-system parameters were correctly governed.

A practical compiler would use implementation-oriented arithmetisation techniques; for example multilinear encodings, range checks, and lookup arguments, targeting the back ends mentioned above (Zinc; Halo2).
The mathematics of Sections~\ref{sect.slice.of.maths} and~\ref{sect.examples} specifies the correctness condition that such a compiler must preserve. 

\paragraph{Roadmap.}
So we can propose a \emph{provisional roadmap.}
If a policy can be logically expressed as a predicate, and if a trusted compilation chain maps that predicate to an algebraic relation while preserving the zero-versus-positive philosophy of Figure~\ref{fig.zero-true-positive-false}, then any sound succinct argument for that relation yields a succinct certificate that the policy predicate is satisfied by the supplied witness data. 
Remaining questions are just what one would expect, moving from mathematics to a deployed assurance mechanism: how to author and audit correctness predicates; how to verify or test the compiler; how to select and parameterise the proof back end; how to benchmark proving cost; and how to record policy versions and verifier configurations for later accountability.

\section{Related work}

\paragraph{Formal specification and verification.}

Formal methods provide mature tools for specifying and proving properties of software and systems.
Hoare logic~\cite{hoare1969axiomatic}, proof-carrying code~\cite{necula1997pcc}, verified compilers like CompCert~\cite{leroy2009compcert}, and verified kernels like seL4~\cite{klein2009sel4} can provide machine-checkable assurance for critical systems.
\emph{Correctness-by-construction} makes assurance part of design and implementation~\cite{hall2002correctness}.
These approaches share with our proposal a formally specified notion of `correctness'.

However, the deployments differ: proof-carrying code is not a black-box technique, whereas our proposal imposes no restrictions on how an agent is designed and implemented, other than insisting that some or all of the actions that it proposes be associated to a succinct certificate proving validity of an agreed correctness predicate.

\paragraph{Runtime verification, proof-carrying code, and certificates.}

\emph{Runtime verification} checks traces of an executing system against formal properties~\cite{leucker2009runtime}.  
\emph{Proof-carrying code} asks an untrusted code producer to supply a proof that code satisfies a consumer's safety policy~\cite{necula1997pcc}.

Our proposal can be read as a cryptographic, succinct, and optionally zero-knowledge analogue of proof-carrying code: the consumer publishes or fixes a policy; the agent supplies an action and a certificate; and the verifier checks the certificate without re-running the agent's entire computation or inspecting all private witness data.

\paragraph{Succinct cryptographic proofs and verifiable computation.}

\emph{zk-SNARKs} and related succinct arguments allow a \emph{prover} to convince a \emph{verifier} that a relation holds between a \emph{witness} and an \emph{instance}, while keeping verifier effort small and, in zero-knowledge variants, hiding some or all of the witness.
Examples include Pinocchio~\cite{parno2013pinocchio}, SNARKs for C~\cite{bensasson2013snarksforc}, Groth's pairing-based arguments~\cite{groth2016size}, and later general frameworks for polynomial IOPs, constraint systems, and related lookup machinery~\cite{bensasson2016iop,bunz2020darkcompilers,thaler2022proofs,setty2023ccs,setty2024lasso}.  
The technical constructions underlying this paper use similar technology; in unpublished work we have shown how to arithmetise the first-order logic of this paper to polynomial constraints on integer/mixed-characteristic proof systems such as Zinc~\cite{garreta2025zinc}, and we are studying how to translate to Halo2~\cite{zcash2026halo2} so that logical specifications can be connected to cryptographic back ends.

\paragraph{zkVMs and proof-oriented execution environments.}

An engineering response to the difficulty of writing polynomial constraints directly is zero-knowledge virtual machines, like Cairo~\cite{goldberg2021cairo}, RISC0~\cite{risczero2026zkvm}, and Jolt~\cite{arun2024jolt}.
These all offer verifiable receipts of correct code execution.
Work on formally verifying Cairo execution in Lean reflects the importance of connecting cryptographic proof systems to proof-assistant-level assurance~\cite{avigad2021verifiedcairo}.

In contrast, the present proposal does not introduce another machine model.
Its high-level specification language is first-order logic itself.
Programs, policies, transaction rules, and compliance requirements are represented as predicates, which are compiled directly to polynomial proof obligations.

\paragraph{Choreographic programming and distributed correctness.}

Choreographic programming and multiparty session types address distributed correctness from complementary angles.
Multiparty session types use global protocol descriptions whose projections type-check communicating processes and establish communication safety, progress, and fidelity~\cite{honda2016multiparty}.
Choreographic programming writes a global coordination plan from which decentralised implementations can be generated correct by construction~\cite{cruzfilipe2017core} (subsequently developed to include functional~\cite{cruzfilipe2022functional} and machine-checked~\cite{cruzfilipe2023formal} formulations).
These ideas may be relevant to agentic AI systems.  
Our proposal does not replace choreographic or protocol-based design, and it deals with the orthogonal problem of how to ensure that possibly Byzantine, possibly black-box agents --- i.e. agents which we did not implement and which we may not be able to look inside --- should still carry cryptographic certificates of correct behaviour.

\section{Conclusions}

Formal methods provide the language of formal specification; correctness-by-construction provides a design discipline; runtime verification and proof-carrying code provide certificate-oriented assurance; and zk-SNARKs/zkVMs provide cryptographically checkable execution.

This paper articulates a synthesis of these ideas for trustworthy agentic AI: a consequential action may be logged and signed --- \emph{and also}, where feasible, it may be accompanied by a succinct certificate of validity of a formally specified notion of correctness, safety, or compliance.

It remains to render these ideas into cryptographically implementable schemes, e.g. by reduction to SNARK-friendly constraint systems.
This is ongoing work: a proposal has been very recently presented~\cite{gabbay2026snarks} and a conference paper submitted. 

Future work includes generating auditable deployment patterns; e.g. specification templates for common agent actions, benchmarks across proof back ends, recording of policy versions and verifier configurations, and independent cryptographic, formal-methods, and policy review.  

These are not mere engineering details. 
A cryptographic certificate proves satisfaction of the encoded formal predicate under the relevant proof-system assumptions and soundness bounds.
It does \emph{not} prove that the correctness predicate itself reflects safety, legal, operational, or ethical requirements. 
This specification gap must be addressed through governance, visibility, and risk-management processes; it cannot be closed by mathematics, logic or cryptography alone~\cite{chan2024visibility,nist2023airmf}.

\hyphenation{Mathe-ma-ti-sche}
\providecommand{\href}[2]{#2}
\providecommand{\nolinkurl}[1]{#1}
\providecommand{\bibdoi}[1]{\href{https://doi.org/#1}{\nolinkurl{doi:#1}}}
\providecommand{\biburl}[1]{\href{#1}{\nolinkurl{#1}}}

\end{document}